\documentclass[11pt]{article}

\usepackage{amsmath,amssymb,amsthm,mathtools}
\usepackage{xcolor}
\usepackage[colorlinks=true,linkcolor=blue,citecolor=blue,urlcolor=blue]{hyperref}
\usepackage[nameinlink,capitalize]{cleveref}

\newtheorem{theorem}{Theorem}
\newtheorem{lemma}[theorem]{Lemma}
\newtheorem{corollary}[theorem]{Corollary}
\theoremstyle{definition}
\newtheorem{definition}{Definition}
\theoremstyle{plain}

\newcommand{\F}{\mathbb F}
\newcommand{\E}{\mathbb E}
\newcommand{\ket}[1]{|#1\rangle}
\newcommand{\bra}[1]{\langle #1|}

\newcommand{\Inc}{\operatorname{Inc}}
\newcommand{\GL}{\mathrm{GL}}
\newcommand{\Tot}{\deg_{\mathrm{tot}}}

\newcommand{\polyind}[3]{\mathrm{Poly}_{\mathrm{ind}}(#1,#2,#3)}
\newcommand{\polytot}[3]{\mathrm{Poly}_{\mathrm{tot}}(#1,#2,#3)}

\newcommand{\polymeas}[3]{\mathrm{PolyMeas}(#1,#2,#3)}
\newcommand{\TotPoly}{\operatorname{TotPoly}}
\newcommand{\TotPolyMeas}{\operatorname{TotPolyMeas}}

\newcommand{\bu}{\boldsymbol{u}}
\newcommand{\bv}{\boldsymbol{v}}

\newcommand{\by}{\boldsymbol{y}}
\newcommand{\bi}{\boldsymbol{i}}

\newcommand{\LinePt}{\mathsf{LinePt}}
\newcommand{\DiagPt}{\mathsf{DiagPt}}
\newcommand{\Bad}{\mathcal B}

\begin{document}

\title{Quantum Soundness of a Total-Degree Line-versus-Point Test}
\author{Tianrun Zhao}
\date{\today}
\maketitle

\begin{abstract}
We prove quantum soundness of the total-degree diagonal line-vs-point test
using the individual-degree soundness theorem of Ji, Natarajan,
Vidick, Wright, and Yuen.  A random change of coordinates yields
projective polynomial decoders of total degree at most \(md\).
The uniform-line slice of the test bounds the weight of outcomes of degree
greater than \(d\), which are removed by a common relabeling.
This reduction does not yield a dimension-independent soundness bound:
the \(\operatorname{poly}(m)\) dependence of the individual-degree
theorem persists, as discussed in \cite[Section~1.2]{JNVWY20}.
\end{abstract}

\section{Introduction}

Low-degree tests ask whether local polynomial answers are consistent
with a single global polynomial.  For provers sharing entanglement,
the corresponding soundness statement concerns global measurements:
each prover should admit a polynomial-indexed measurement whose
evaluations agree with the other prover's point answers, and the two
decoded polynomial labels should agree with high probability.
Here we study this question when the test samples from the diagonal line
distribution defined below and requires answers of degree at most \(d\).

The proof separates constructing a global decoder from controlling its
total degree.  A random invertible change of coordinates allows us to
apply the individual-degree soundness theorem
of Ji--Natarajan--Vidick--Wright--Yuen~\cite[Theorem~3.10]{JNVWY20}.
Its decoders may have total degree as large as \(md\).  When \(q\) is
sufficiently large relative to \(md\), a polynomial of excessive total
degree disagrees with every permitted
line answer on most points of most lines.  The original test therefore
bounds the weight of these outcomes, which we relabel by the zero
polynomial.  This preserves projectivity and agreement between the
decoders.  The resulting short reduction inherits the quantitative
parameters, including the dimension dependence, of the cited theorem.

\section{Notation and conventions}

Let \(q\) be a prime power and let \(m\ge1\), \(d\ge0\) be integers.
Every function \(\F_q^m\to\F_q\) has a unique reduced representative,
meaning a polynomial whose degree in each variable is at most \(q-1\).
For such a function \(P\), write
\[
  \Tot(P)
\]
for the total degree of its reduced representative.

Write
\[
  \polytot{m}{q}{d}
\]
for the set of functions \(\F_q^m\to\F_q\) represented by reduced
polynomials of total degree at most \(d\), and write
\[
  \polyind{m}{q}{d}
\]
for the set of functions represented by reduced polynomials of individual
degree at most \(d\).
As in \cite{JNVWY20}, \(\polymeas{m}{q}{d}\) denotes the set
of measurements indexed by \(\polyind{m}{q}{d}\).

Let \(\LinePt\) denote the uniform nondegenerate point-line distribution:
sample \(\bu\in\F_q^m\) uniformly, then sample a uniformly random
nondegenerate affine line \(\ell\subseteq\F_q^m\) containing \(\bu\).

Let \(\DiagPt\) denote the diagonal/general-line distribution from
\cite[Section~3]{JNVWY20}: sample
\[
  \bu\sim\F_q^m,\qquad
  \bi\sim\{1,\ldots,m\},
\]
uniformly, then sample
\[
  \bv\in V_{\bi}
  :=
  \{\bv\in\F_q^m:\ v_{\bi+1}=\cdots=v_m=0\}
\]
uniformly, and set
\[
  \ell=\{\bu+t\bv:t\in\F_q\}.
\]
If \(\bv=0\), we regard \(\ell=\{\bu\}\) as a degenerate singleton line.
Thus this note uses the literal diagonal sampler, including the event
\(\bv=0\), and extends line queries to singleton queries. All samplers
below use this literal convention.

Throughout, \(r\in\{\mathrm A,\mathrm B\}\) indexes the two provers, who
share a normalized state
\(\ket{\psi}\in\mathcal H_{\mathrm A}\otimes\mathcal H_{\mathrm B}\).
Prover \(r\)'s measurement operators act on \(\mathcal H_r\).

If \(G=\{G_P\}_P\) is a measurement indexed by functions
\(P:\F_q^m\to\F_q\), define its point-value coarse-graining by
\[
  G^u_a
  :=
  \sum_{P:P(u)=a}G_P.
\]
Equivalently, we write
\[
  G_{[P(u)=a]}
  :=
  \sum_{P:P(u)=a}G_P.
\]
Similarly, if
\[
  L^\ell=\{L^\ell_h\}_h
\]
is a line measurement, define
\[
  L^\ell_{[h(u)=a]}
  :=
  \sum_{h:h(u)=a}L^\ell_h.
\]
We write \(L^{r,\ell,u}:=\{L^{r,\ell}_{[h(u)=a]}\}_{a\in\F_q}\).

We use the probability-style consistency convention from
\cite[Section~4]{JNVWY20}. Let \(\mathcal D\) be a probability
distribution on the set of question indices \(x\).
Then
\[
  M^x_a\otimes I\simeq_\eta I\otimes N^x_a
\]
on \(x\sim\mathcal D\) means
\[
  \E_{x\sim\mathcal D}
  \sum_{a\ne b}
  \bra{\psi}M^x_a\otimes N^x_b\ket{\psi}
  \le \eta.
\]
For measurements \(M,N\) on the two respective subsystems, write
\[
  \Inc_\psi(M,N)
  :=\sum_{a\ne b}\bra{\psi}M_a\otimes N_b\ket{\psi}.
\]
We use the deterministic post-processing fact for \(\simeq_\eta\)
from the preliminaries of
\cite[Section~4]{JNVWY20}.

\begin{definition}[Total-degree diagonal line-vs-point test]
\label{def:total-degree-diagonal-line-vs-point}
Let \(m\ge1\), let \(q\) be a prime power, and let \(d\ge0\).
The degree-\(d\) total-degree diagonal line-vs-point test is the following
two-prover test.

With probability \(1/3\) each, perform one of the following tests.

\begin{enumerate}
\item \textbf{Alice point, Bob diagonal/general line.}
Sample \((\bu,\bi,\bv,\ell)\sim\DiagPt\).
Send \(\bu\) to Alice and \(\ell\) to Bob.
Alice answers \(a\in\F_q\).
Bob answers a function \(h:\ell\to\F_q\) which is the restriction of a
univariate polynomial of degree at most \(d\) on \(\ell\).
Accept iff
\[
  h(\bu)=a.
\]

\item \textbf{Alice diagonal/general line, Bob point.}
Sample \((\bu,\bi,\bv,\ell)\sim\DiagPt\).
Send \(\ell\) to Alice and \(\bu\) to Bob.
Alice answers a function \(h:\ell\to\F_q\) which is the restriction of a
univariate polynomial of degree at most \(d\) on \(\ell\).
Bob answers \(a\in\F_q\).
Accept iff
\[
  h(\bu)=a.
\]

\item \textbf{Point self-consistency.}
Sample \(\bu\sim\F_q^m\) uniformly.
Send \(\bu\) to both provers.
They answer \(a,b\in\F_q\).
Accept iff
\[
  a=b.
\]
\end{enumerate}

For a singleton line \(\ell=\{\bu\}\), every function
\(h:\ell\to\F_q\) is regarded as degree \(0\), hence as degree at most
\(d\).

A projective strategy for this test is denoted
\[
  (\psi,A^{\mathrm A},L^{\mathrm A},A^{\mathrm B},L^{\mathrm B}),
\]
where \(A^{r,u}=\{A^{r,u}_a\}_{a\in\F_q}\) is a projective point
measurement, and \(L^{r,\ell}=\{L^{r,\ell}_h\}_{\deg h\le d}\) is a
projective line measurement.
\end{definition}

\begin{definition}[Total-degree polynomial measurements]
Let
\[
  \TotPoly(m,q,d)
\]
denote the set of functions \(\F_q^m\to\F_q\) represented by reduced
polynomials of total degree at most \(d\).
Let
\[
  \TotPolyMeas(m,q,d)
\]
denote the set of projective measurements whose outcomes are indexed by
\(\TotPoly(m,q,d)\).
\end{definition}

\section{Main result}

\begin{theorem}[Quantum soundness of the total-degree diagonal line-vs-point test]
\label{thm:total-degree-diagonal-from-individual-degree}
Let
\[
  (\psi,A^{\mathrm A},L^{\mathrm A},A^{\mathrm B},L^{\mathrm B})
\]
be a projective strategy that passes the degree-\(d\) total-degree
diagonal line-vs-point test with probability at least \(1-\epsilon\),
where \(0\le\epsilon\le1\).
Let \(k\ge md\), and put
\[
  c_0=\frac1{40000}.
\]
For some universal constant \(C_0\), put
\[
  \nu_{\mathrm{tot}}
  :=
  C_0 (k+1)^2m^4
  \left(
    \epsilon^{c_0}
    +
    \left(\frac{d}{q}\right)^{c_0}
    +
    e^{-k/(2560000m^2)}
  \right).
\]
Then there exist projective measurements
\[
  G^{\mathrm A},G^{\mathrm B}\in\TotPolyMeas(m,q,d)
\]
such that, on average over \(\bu\sim\F_q^m\),
\[
  A^{\mathrm A,u}_a\otimes I
  \simeq_{\nu_{\mathrm{tot}}}
  I\otimes G^{\mathrm B}_{[g(u)=a]},
\]
\[
  I\otimes A^{\mathrm B,u}_a
  \simeq_{\nu_{\mathrm{tot}}}
  G^{\mathrm A}_{[g(u)=a]}\otimes I,
\]
and
\[
  G^{\mathrm A}_g\otimes I
  \simeq_{\nu_{\mathrm{tot}}}
  I\otimes G^{\mathrm B}_g.
\]
\end{theorem}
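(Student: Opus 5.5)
We follow the route sketched in the introduction: first obtain decoders with individual-degree outcomes, then prune them to total degree at most \(d\).

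\textbf{Step 1: a strategy for the individual-degree test.} Every univariate restriction of a total-degree-\(d\) polynomial has degree at most \(d\). However, the individual-degree test of \cite[Theorem~3.10]{JNVWY20} asks about the diagonal lines in \(\DiagPt\). Since our test uses the same sampler, the given strategy is already a strategy for the diagonal line-vs-point test of \cite{JNVWY20} with degree parameter \(d\). There is one subtlety: that theorem promises decoders in \(\polymeas{m}{q}{d}\), but a total-degree answer pattern only justifies individual degree after a generic change of coordinates. So we conjugate by a uniformly random \(M\in\GL_m(\F_q)\) and translation. Under this map a function of total degree \(\le D\) keeps total degree \(\le D\), and hence has individual degree \(\le D\). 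We average the soundness guarantee over \(M\), using that \(\DiagPt\) pushed through a random affine map is close to \(\LinePt\) together with its diagonal structure. We then apply \cite[Theorem~3.10]{JNVWY20} with individual degree parameter \(d\), which yields projective \(\hat G^{r}\in\polymeas{m}{q}{d}\) whose outcomes have total degree at most \(md\le k\). They satisfy the three consistency relations with error \(\nu_{\mathrm{ind}}=\mathrm{poly}(m)(\epsilon^{c_0}+(d/q)^{c_0}+e^{-k/(2560000m^2)})\). The factor \((k+1)^2m^4\) is meant to absorb these constants and the cost of the coordinate change.

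\textbf{Step 2: bounding high-degree weight.} Let \(\Bad\) be the set of outcomes \(P\) with \(\Tot(P)>d\). By Schwartz--Zippel, for such a \(P\), a uniformly random nondegenerate line \(\ell\) has \(P|_\ell\) of degree \(>d\) except with probability \(\le md/q\). On such a line, \(P|_\ell\) agrees with any degree-\(\le d\) univariate \(h\) on at most \(k/q\) points. The uniform-line slice of \(\DiagPt\) (the event \(\bi=m\), of probability \(1/m\)) therefore gives
\[
\sum_{P\in\Bad}\bra\psi \hat G^{\mathrm A}_P\otimes I\ket\psi\ \lesssim\ m\bigl(\Pr[\text{line test fails}]+\nu_{\mathrm{ind}}\bigr)+\tfrac{2k}{q}.
\]
To get this, we combine \(\hat G^{\mathrm A}_{[P(u)=a]}\simeq A^{\mathrm B,u}_a\) with the line-vs-point consistency for Bob. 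Bob's line answer \(h\) must agree with both at \(\bu\), and \(\bu\) is uniform on \(\ell\). The term \(k/q\) is at most \((d/q)\)-type up to the factor \(m\), or \(k\ge q\) makes the bound trivial. The symmetric argument handles \(\hat G^{\mathrm B}\).

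\textbf{Step 3: relabeling.} Define \(G^r_P=\hat G^r_P\) for \(P\notin\Bad\) with \(P\ne0\), and \(G^r_0=\hat G^r_0+\sum_{P\in\Bad}\hat G^r_P\). This is a projective measurement in \(\TotPolyMeas(m,q,d)\) and a deterministic post-processing of \(\hat G^r\). By the post-processing fact, \(G^{\mathrm A}_g\otimes I\simeq I\otimes G^{\mathrm B}_g\) keeps its error. The point-consistency errors increase by at most the bad weight from Step 2, since the two labelings differ only on \(\Bad\).

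\textbf{Main obstacle.} The main obstacle is Step 1: justifying that the coordinate change turns the total-degree guarantee into a valid input for the individual-degree theorem on the literal diagonal sampler, including singleton lines. Our first attempt would be to check that affine images of \(\DiagPt\)-lines, averaged over \(M\), match the \(\DiagPt\) question distribution up to \(O(m/q)\) total variation. We would then push the strategy through this coupling.
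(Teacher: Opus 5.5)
\textbf{Gap in Step~1.} Your diagnosis of why a coordinate change is needed is wrong. A function of total degree at most $d$ has individual degree at most $d$ in every coordinate system, so that is not the reason. The real reason is the shape of the low-individual-degree test of \cite[Section~3]{JNVWY20}. Besides the diagonal subtest, which expects degree-$\le md$ answers (so you must pad), it has an axis-parallel subtest that asks $\{\by+te_{\bi}\}$ with $\bi$ uniform and degree-$\le d$ answers. Under $\DiagPt$, a line parallel to $e_j$ with $j\ge2$ is asked with probability at most $1/(mq^{j-1})$. So passing your test says essentially nothing about that subtest.

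Your remedy is a uniform $M\in\GL_m(\F_q)$ plus a translation. It makes the axis-parallel subtest look like $\LinePt$, but it no longer leaves the diagonal subtest invariant. The check you plan in your ``main obstacle'' paragraph is false. For $m\ge2$, under $\DiagPt$ the direction $\bv$ lies in $V_{m-1}$ with probability at least $(m-1)/m$, while $M\bv$ lies there with probability at most $3/q$. The total variation distance is therefore at least $(m-1)/m-3/q$, not $O(m/q)$.

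What a general $M$ can give is domination. Averaged over $M$, both line subtests become $\LinePt$ plus singletons, and the uniform-line slice (\Cref{cor:diag-dominates-uniform-lines}) bounds this by $2m$ times the original rejection. That yields $\rho=O(m\epsilon)$, so $\rho^{c_0}$ carries a factor $m^{c_0}$. No universal $C_0$ absorbs that into the stated $m^4$ bound.

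\textbf{The missing idea.} Use a coordinate change that preserves the flag $V_1\subset\cdots\subset V_m$. Take $T$ upper-triangular with the columns $Te_i$ independent and uniform on $V_i\setminus V_{i-1}$. Then $T(V_i)=V_i$, so $(\by,\bv)\mapsto(T\by,T\bv)$ is a bijection of $\F_q^m\times V_i$. Hence the diagonal subtest, including $\bv=0$, has exactly the original average line-vs-point rejection $s$. Meanwhile $(T\by,Te_{\bi})$ is distributed as $(\bu,\bv)\sim\DiagPt$ conditioned on $v_{\bi}\ne0$, an event of probability $1-1/q$. So the averaged axis rejection is at most $s/(1-1/q)$, and some fixed $T$ gives $\rho\le3\epsilon/2$ with no loss in $m$. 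The decoders from \cite[Theorem~3.10]{JNVWY20} have individual degree $\le d$ in the new coordinates, hence total degree $\le md$ after composing with $T^{-1}$.

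\textbf{Steps~2 and~3.} These follow the paper's degree-collapse and pruning, but two bookkeeping points matter for the exact bound.
\begin{itemize}
\item The chain from $\widehat G^{\mathrm A}$ to Bob's line measurement must pass through $A^{\mathrm B,u}$ and then $A^{\mathrm A,u}$. That second link uses the point self-consistency subtest, which your sketch omits.
\item You should not multiply $\nu_{\mathrm{ind}}$ by $m$, which would give $m^5$. The decoder-vs-point relations involve only $\bu$, whose $\LinePt$-marginal is uniform, so only the line-vs-point errors pay the factor $2m$. This gives $\tau_r\le3\nu_{\mathrm{ind}}+18m\epsilon$.
\end{itemize}
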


We prove the theorem by first constructing decoders of total degree at
most \(md\), then comparing them with the original line measurements and
relabeling the outcomes of degree greater than \(d\).
The proof is assembled in \Cref{sec:total-degree-proof}.

\section{Constructing preliminary decoders}

\begin{lemma}[Reduction to the low-individual-degree test]
\label[lemma]{lem:random-coordinate-reduction}
Let
\[
  (\psi,A^{\mathrm A},L^{\mathrm A},A^{\mathrm B},L^{\mathrm B})
\]
be a projective strategy that passes the degree-\(d\) total-degree
diagonal line-vs-point test with probability at least \(1-\epsilon\),
where \(0\le\epsilon\le1\).
There is an invertible upper-triangular matrix \(T\in\GL_m(\F_q)\)
such that pulling back the strategy along \(y\mapsto Ty\), and padding
the diagonal-line answer space to degree at most \(md\), gives a
projective strategy for the \((m,q,d)\)-low individual degree test of
\cite[Section~3]{JNVWY20} with rejection probability at most
\[
  \rho:=\min\{1,3\epsilon/2\}.
\]
Its point measurements are
\(\widehat A^{r,y}_a=A^{r,Ty}_a\) for
\(r\in\{\mathrm A,\mathrm B\}\).
\end{lemma}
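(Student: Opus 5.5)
The plan is to realize the pulled-back strategy directly as a strategy for the \((m,q,d)\)-low individual degree test of \cite[Section~3]{JNVWY20}, and to check that its question distribution coincides exactly with the line-vs-point slices of our test. I will read that test, in the literal convention fixed above, as follows. With probability \(1/2\) each role assignment, sample \((\bu,\bi,\bv,\ell)\sim\DiagPt\), send the point to one prover and the line to the other, accept iff \(h(\bu)=a\), and take answers on lines to be univariate polynomials of degree at most \(md\). This is the right answer space because a polynomial of individual degree \(d\), restricted to a general line, has degree at most \(md\). Having no point self-consistency slice is consistent with the constant \(3/2\) in \(\rho\). If the cited test also contains a point-consistency slice, it is handled in the same way by slice 3 of \Cref{def:total-degree-diagonal-line-vs-point}, and only the bookkeeping constant changes.

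Let \(T\) be any invertible upper-triangular matrix; existence is all the lemma claims, and the specific random choice matters only later. The key observation is that upper-triangular \(T\) preserves each subspace \(V_i=\{v:v_{i+1}=\cdots=v_m=0\}\), and acts bijectively on it.
\begin{itemize}
\item If \(\by\sim\F_q^m\) is uniform, then \(T\by\) is uniform.
\item If \(\bv\in V_{\bi}\) is uniform, then \(T\bv\) is uniform in \(V_{\bi}\), with \(\bi\) unchanged.
\item Hence the map \((\by,\bi,\bv)\mapsto(T\by,\bi,T\bv)\) pushes \(\DiagPt\) forward to \(\DiagPt\), and carries the line \(\{\by+t\bv\}\) onto \(\{T\by+tT\bv\}\).
\item The event \(\bv=0\) is preserved, so singleton lines go to singleton lines.
\end{itemize}

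Define the new strategy as follows.
\begin{itemize}
\item Point measurements are \(\widehat A^{r,y}_a=A^{r,Ty}_a\).
\item For a line \(\ell'=\{y+tv\}\), set \(\widehat L^{r,\ell'}_{h'}=L^{r,T\ell'}_h\) when \(h'=h\circ T\) and \(\deg h\le d\), and \(\widehat L^{r,\ell'}_{h'}=0\) for all other \(h'\) of degree at most \(md\).
\end{itemize}
Composition with the affine bijection \(T|_{\ell'}\colon\ell'\to T\ell'\) preserves univariate degree, so the relabeling \(h\mapsto h\circ T\) is injective into the degree-\(\le md\) answer set. The padded family is therefore again a projective measurement, and the point measurements are projective too. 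On singleton lines every answer is degree \(0\), so padding is trivial there.

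By the pushforward identity, the acceptance events correspond exactly: \(h'(\by)=a\) iff \(h(T\by)=a\). So the rejection probability of the new strategy on the role-\((\mathrm A\text{ point},\mathrm B\text{ line})\) slice equals that of slice 1 of the original test, and similarly for slice 2. Writing \(\epsilon_1,\epsilon_2,\epsilon_3\) for the slice rejection probabilities, we have \((\epsilon_1+\epsilon_2+\epsilon_3)/3\le\epsilon\). The new rejection is \((\epsilon_1+\epsilon_2)/2\le 3\epsilon/2\), and it is trivially at most \(1\), giving \(\rho\).

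The main obstacle is not mathematical but definitional. I must match our literal diagonal sampler, including \(\bv=0\) and singleton queries, to the exact sampler and answer format of \cite[Section~3]{JNVWY20}. In particular I need that the cited test uses the same \(V_i\)-diagonal lines with the degree-\(md\) answer bound, so that the pushforward argument lands on precisely that distribution. I would first verify invariance of \(V_i\) under upper-triangular \(T\), which is the single fact making the distribution match exactly rather than approximately.
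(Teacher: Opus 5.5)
\textbf{Gap: the axis-parallel subtest is missing.} You have misread the target test. The \((m,q,d)\)-low individual degree test of \cite[Section~3]{JNVWY20} has three equally weighted subtests, and one of them is an \emph{axis-parallel} line-vs-point subtest. In it the verifier samples \(\by\sim\F_q^m\) and \(\bi\sim\{1,\ldots,m\}\), sends \(\widehat\ell_{\by,\bi}=\{\by+te_{\bi}:t\in\F_q\}\) to one prover and \(\by\) to the other, and expects a degree-\(\le d\) line answer.

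Your pushforward argument for the other two subtests is correct and is what the paper does. Since \(T(V_i)=V_i\), for every upper-triangular \(T\) the diagonal rejection equals \(s=(\eta_{\mathrm{AB}}+\eta_{\mathrm{BA}})/2\) and the self-consistency rejection equals \(\delta\). However, you never define the pulled-back axis measurements \(\widehat B^{r,\widehat\ell}_{\widehat h}:=L^{r,T\widehat\ell}_h\), and you never bound their rejection. That your \((\epsilon_1+\epsilon_2)/2\) matches the constant \(3/2\) is a coincidence.

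\textbf{A fixed \(T\) can fail.} This is not bookkeeping: your claim that any upper-triangular \(T\) works is false. Fix \(T\), take \(m\ge2\), and let \(w=Te_2\in V_2\setminus V_1\). Under \(\DiagPt\), the sampled line is parallel to \(w\) with probability \(\frac1m\sum_{j\ge2}(q-1)q^{-j}\le\frac1{mq}\). Now take the honest strategy for a global \(f\) of total degree \(\le d\), modified to answer \(f|_\ell+1\) on every line parallel to \(w\). Then \(\epsilon\le\frac2{3mq}\). Yet the pulled-back strategy is rejected whenever \(\bi=2\) in the axis subtest, because \(T\widehat\ell_{\by,2}\) is parallel to \(w\). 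Its rejection is therefore at least \(\frac1{3m}>\frac{3\epsilon}{2}\) once \(q\ge4\). So \(T\) must depend on the strategy, and the random choice of \(T\) is the substance of this lemma, not something used later.

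\textbf{The missing idea.} Take \(T\) uniformly random upper-triangular, with columns \(Te_i\) independent and uniform on \(V_i\setminus V_{i-1}\).
Then \((T\by,Te_{\bi})\) is distributed as \((\bu,\bv)\sim\DiagPt\) conditioned on \(v_{\bi}\ne0\), an event of probability \(1-1/q\) for every \(i\). This gives \(\E_T\widehat\eta_{\mathrm{axis}}(T)\le s/(1-1/q)\le2s\). Averaging over the three subtests yields \(\E_T\widehat\rho(T)\le\frac13(2s+\delta+s)\le\frac{2s+\delta}{2}\le\frac{3\epsilon}{2}\), so some fixed \(T\) attains the bound.
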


\begin{proof}
Let \(\eta_{\mathrm{AB}},\eta_{\mathrm{BA}},\delta\) be the rejection
probabilities of the Alice-point/Bob-line, Alice-line/Bob-point, and
point self-consistency subtests, respectively. Put
\[
  s:=\frac{\eta_{\mathrm{AB}}+\eta_{\mathrm{BA}}}{2},
  \qquad 2s+\delta\le3\epsilon.
\]

\medskip
\noindent\textbf{The transformed strategy.}
Put \(V_0=\{0\}\), and choose the columns \(Te_i\) independently and
uniformly from \(V_i\setminus V_{i-1}\). Then \(T\) is a uniformly random
invertible upper-triangular matrix, and \(T(V_i)=V_i\) for every \(i\).
Set \(\widehat A^{r,y}_a:=A^{r,Ty}_a\).

For an axis-parallel line \(\widehat\ell\) in the transformed
coordinates and an answer \(\widehat h\) of degree at most \(d\), let
\(h:T\widehat\ell\to\F_q\) be given by \(h(Ty)=\widehat h(y)\), and set
\[
  \widehat B^{r,\widehat\ell}_{\widehat h}
  :=
  L^{r,T\widehat\ell}_{h}.
\]

For the diagonal/general-line measurement of the low-individual-degree
test, whose answer space has degree bound \(md\), define, for every
\(\widehat\ell\), including singletons,
\[
  \widehat L^{r,\widehat\ell}_{\widehat h}
  :=
  \begin{cases}
  L^{r,T\widehat\ell}_{h}, & \deg \widehat h\le d,\\
  0, & d<\deg \widehat h\le md,
  \end{cases}
\]
again with \(h(Ty)=\widehat h(y)\). Thus degree-\(\le d\) total-degree
line answers are padded into the larger degree-\(\le md\) diagonal-line
answer space by assigning zero operator to the extra outcomes.

On a singleton every function has degree zero, so this same definition
pulls back the original singleton measurement. This agrees with the
literal sampler in \cite[Figure~2]{JNVWY20}, which allows \(\bv=0\).
Relabeling outcomes and adding zero projections preserve completeness
and projectivity.

\medskip
\noindent\textbf{The subtest distributions.}
Let \(\widehat\eta_{\mathrm{axis}}(T)\) and
\(\widehat\eta_{\mathrm{diag}}(T)\) denote the transformed strategy's
axis-parallel and diagonal subtest rejection probabilities, respectively,
each averaged over the two role orientations.
In the low-individual-degree axis-parallel subtest, the verifier samples
\(\by\sim\F_q^m\) and \(\bi\sim\{1,\ldots,m\}\) uniformly, and asks
the line
\[
  \widehat\ell_{\by,\bi}
  :=
  \{\by+t e_{\bi}:t\in\F_q\}.
\]
For any function
\(F:\F_q^m\times(\F_q^m\setminus\{0\})\to\mathbb R\), we have
\[
  \E_{T,\by,\bi}F(T\by,Te_{\bi})
  =
  \E_{\DiagPt}\bigl[F(\bu,\bv)\mid v_{\bi}\ne0\bigr].
\]
Indeed, conditional on any \(T\), the point \(T\by\) is uniform, so it
is independent of \(T\). For each \(i\), the column \(Te_i\) is uniform
on \(V_i\setminus V_{i-1}\), which is precisely the distribution of
\(\bv\sim V_i\) conditioned on \(v_i\ne0\). This event has probability
\(1-1/q\) for every \(i\), so conditioning leaves \(\bi\) uniform.

Since \(T\widehat\ell_{\by,\bi}=\{T\by+tTe_{\bi}:t\in\F_q\}\),
after averaging over \(T\) and the two role orientations, the
axis-parallel rejection probability is the original diagonal rejection
conditioned on \(v_{\bi}\ne0\). Nonnegativity gives
\[
  \E_T\widehat\eta_{\mathrm{axis}}(T)
  \le \frac{s}{1-1/q}.
\]

The transformed point self-consistency error is exactly \(\delta\), since
\(\by\mapsto T\by\) is a bijection of \(\F_q^m\).
For the diagonal/general-line subtest, for each fixed \(T\) and \(i\),
the map
\[
  (\by,\bv)\mapsto(T\by,T\bv)
\]
is a bijection of \(\F_q^m\times V_i\), since \(T(V_i)=V_i\).
It therefore preserves the independent uniform point and direction,
including the event \(\bv=0\). All line measurements, including singleton
measurements, were pulled back from the original strategy. Consequently,
the diagonal rejection probability is unchanged:
\[
  \widehat\eta_{\mathrm{diag}}(T)=s.
\]
\medskip
\noindent\textbf{Choosing a coordinate change.}
Writing \(\widehat\rho(T)\) for the transformed strategy's rejection
probability, the three equally weighted subtests give
\[
\begin{aligned}
  \E_T\widehat\rho(T)
  &=
  \frac13\E_T\widehat\eta_{\mathrm{axis}}(T)
  +\frac13\delta+\frac13s\\
  &\le \frac13\left(\frac{s}{1-1/q}+\delta+s\right)\\
  &\le \frac{3s+\delta}{3}
  \le \frac{2s+\delta}{2}
  \le \frac32\epsilon.
\end{aligned}
\]
Here we used \(q\ge2\) and \(\delta\ge0\).
Thus there exists a fixed upper-triangular \(T\in\GL_m(\F_q)\) for which
\(\widehat\rho(T)\le\rho\), as required.
\end{proof}

\begin{corollary}[Preliminary polynomial decoders]
\label[corollary]{cor:preliminary-decoders}
Let \((\psi,A^{\mathrm A},L^{\mathrm A},A^{\mathrm B},L^{\mathrm B})\)
be a projective strategy that passes the degree-\(d\) total-degree
diagonal line-vs-point test with probability at least \(1-\epsilon\),
where \(0\le\epsilon\le1\). Suppose \(q>md\), and let \(k\ge md\).
Put
\[
  K:=\max\{1,\lceil k\rceil\},
  \qquad \rho:=\min\{1,3\epsilon/2\},
  \qquad c_0:=\frac1{40000},
\]
and define
\begin{equation}
\label{eq:preliminary-decoder-error}
  \nu_{\mathrm{ind}}
  :=
  100000 K^2m^4
  \left(
    \rho^{c_0}
    +
    \left(\frac{d}{q}\right)^{c_0}
    +
    e^{-K/(2560000m^2)}
  \right).
\end{equation}
Then there exist projective measurements
\[
  \widetilde G^{\mathrm A},\widetilde G^{\mathrm B}
  \in\TotPolyMeas(m,q,md)
\]
such that, on average over \(\bu\sim\F_q^m\),
\[
\begin{aligned}
  A^{\mathrm A,u}_a\otimes I
  &\simeq_{\nu_{\mathrm{ind}}}
  I\otimes\widetilde G^{\mathrm B}_{[P(u)=a]},\\
  I\otimes A^{\mathrm B,u}_a
  &\simeq_{\nu_{\mathrm{ind}}}
  \widetilde G^{\mathrm A}_{[P(u)=a]}\otimes I,
\end{aligned}
\]
and their polynomial labels satisfy
\[
  \widetilde G^{\mathrm A}_P\otimes I
  \simeq_{\nu_{\mathrm{ind}}}
  I\otimes\widetilde G^{\mathrm B}_P.
\]
\end{corollary}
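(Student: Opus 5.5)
The plan is to combine \Cref{lem:random-coordinate-reduction} with \cite[Theorem~3.10]{JNVWY20}, then pull the resulting decoders back through the coordinate change. By \Cref{lem:random-coordinate-reduction}, there is an invertible upper-triangular \(T\) such that the transformed strategy, with point measurements \(\widehat A^{r,y}_a=A^{r,Ty}_a\), is a projective strategy for the \((m,q,d)\)-low individual degree test with rejection probability at most \(\rho\).

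The first step is to apply \cite[Theorem~3.10]{JNVWY20} to this transformed strategy. We use the integer parameter \(K=\max\{1,\lceil k\rceil\}\ge md\), and the hypothesis \(q>md\) is what the cited theorem requires. The theorem then gives projective measurements \(\widehat G^{\mathrm A},\widehat G^{\mathrm B}\in\polymeas{m}{q}{d}\) with three properties, all on average over \(\by\sim\F_q^m\):
\[
  \widehat A^{\mathrm A,y}_a\otimes I\simeq_{\nu_{\mathrm{ind}}}I\otimes\widehat G^{\mathrm B}_{[Q(y)=a]},
\]
the symmetric relation with the roles of the provers swapped, and the label agreement \(\widehat G^{\mathrm A}_Q\otimes I\simeq_{\nu_{\mathrm{ind}}}I\otimes\widehat G^{\mathrm B}_Q\). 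The error \(\nu_{\mathrm{ind}}\) is the expression in \eqref{eq:preliminary-decoder-error}. I expect this step to be the main obstacle. It requires matching the constants and the form of the error exactly to the cited theorem: the exponent \(c_0\), the factor \(K^2m^4\), and the \(e^{-K/(2560000m^2)}\) term. If the cited theorem is stated only for a single symmetric prover, or only for one direction of consistency, I would apply it once per orientation and take the worse bound. I would also check that the padded degree-\(md\) diagonal answer space is exactly what the theorem's test expects.

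The second step is to define the pulled-back decoders \(\widetilde G^r_P:=\widehat G^r_{P\circ T}\). Since \(P\mapsto P\circ T\) is a bijection of functions \(\F_q^m\to\F_q\), this is a projective measurement. Its labels \(P=Q\circ T^{-1}\) have total degree at most \(md\): composing with an invertible linear map preserves total degree, and a polynomial of individual degree \(\le d\) in \(m\) variables has total degree \(\le md\). Because \(md<q\), every exponent stays below \(q\), so the composed polynomial is already reduced and its reduced total degree really is at most \(md\). Hence \(\widetilde G^r\in\TotPolyMeas(m,q,md)\).

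The third step checks that the consistency relations transfer. For each \(y\), setting \(u=Ty\) gives \(\widetilde G^r_{[P(u)=a]}=\widehat G^r_{[Q(y)=a]}\). Moreover, \(y\mapsto Ty\) preserves the uniform distribution, so averaging over \(\by\) is the same as averaging over \(\bu\). Together with \(\widehat A^{r,y}_a=A^{r,u}_a\), the point relations carry over verbatim. The label agreement is invariant under relabeling the outcomes by the bijection \(Q\mapsto Q\circ T^{-1}\). This completes the plan.
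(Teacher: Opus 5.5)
Your proposal is correct and follows essentially the same route as the paper. You choose \(T\) via \Cref{lem:random-coordinate-reduction}, apply \cite[Theorem~3.10]{JNVWY20} with the integer parameter \(K\), pull the decoders back along the bijective relabeling \(\widehat P\mapsto\widehat P\circ T^{-1}\) (reduced because \(md<q\)), and transfer the point relations using the uniformity of \(y\mapsto Ty\). Your direct definition \(\widetilde G^r_P=\widehat G^r_{P\circ T}\) is just the paper's preimage sum written without the summation, since that sum has at most one term.
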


\begin{proof}
Use \Cref{lem:random-coordinate-reduction} to choose \(T\), and apply
\cite[Theorem~3.10]{JNVWY20} to the transformed strategy with the positive
integer parameter \(K\ge md\). It gives projective measurements
\(\widehat G^{\mathrm A},\widehat G^{\mathrm B}\in\polymeas{m}{q}{d}\)
whose two point-consistency errors and polynomial-label inconsistency
are at most \(\nu_{\mathrm{ind}}\).

For each individual-degree polynomial \(\widehat P\), define
\[
  P(x)=\widehat P(T^{-1}x).
\]
Since \(\widehat P\) has individual degree at most \(d\), it has total
degree at most \(md\), and composition with the invertible linear map
\(T^{-1}\) does not increase total degree. Thus \(P\) has total degree at
most \(md\). Since \(md<q\), each monomial has individual degree less
than \(q\), so this polynomial is already its reduced representative.

Define the measurements in the original coordinates by relabeling:
\[
  \widetilde G^r_P
  :=
  \sum_{\widehat P:\ \widehat P\circ T^{-1}=P}
  \widehat G^r_{\widehat P},
  \qquad
  r\in\{\mathrm A,\mathrm B\}.
\]
Since the map \(\widehat P\mapsto\widehat P\circ T^{-1}\) is a
deterministic relabeling of functions, this grouping preserves
completeness, projectivity, and polynomial-label consistency. Finally,
\(P(Ty)=\widehat P(y)\),
\(\widehat A^{r,y}_a=A^{r,Ty}_a\), and the bijection \(y\mapsto Ty\)
preserves the uniform point distribution. The two point-consistency
relations therefore hold in the original coordinates with the same error.
\end{proof}

\section{Controlling total degree}

\begin{lemma}[The uniform-line slice of the diagonal distribution]
\label[lemma]{lem:uniform-line-slice}
Let
\[
  \alpha
  :=
  \Pr_{\DiagPt}[\bi=m,\ \bv\ne0]
  =
  \frac{1-q^{-m}}{m}.
\]
Then, conditioned on the event \(\{\bi=m,\bv\ne0\}\), the induced
distribution of \((\bu,\ell)\) is exactly \(\LinePt\).
\end{lemma}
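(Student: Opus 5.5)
The plan is a direct computation with the sampler.

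First, compute \(\alpha\). Under \(\DiagPt\), \(\bi\) is uniform on \(\{1,\ldots,m\}\), so \(\Pr[\bi=m]=1/m\). Given \(\bi=m\), we have \(V_m=\F_q^m\), so \(\bv\) is uniform on all of \(\F_q^m\). Hence \(\Pr[\bv\ne0\mid \bi=m]=1-q^{-m}\), and \(\alpha=(1-q^{-m})/m\).

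Next, describe the conditional law. Conditioned on \(\{\bi=m,\bv\ne0\}\), the pair \((\bu,\bv)\) is uniform on \(\F_q^m\times(\F_q^m\setminus\{0\})\). This holds because \(\bu\) is independent of \((\bi,\bv)\), and conditioning a uniform vector on a subset keeps it uniform on that subset. The line is \(\ell=\bu+\F_q\bv\), which is nondegenerate.

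The key step is to show that for a fixed \(u\), the law of \(\ell\) is uniform over the nondegenerate lines through \(u\). The map \(\bv\mapsto \bu+\F_q\bv\) from nonzero directions to lines through \(u\) is surjective. Each line through \(u\) has exactly \(q-1\) preimages, namely the nonzero multiples of any one direction. So every line through \(u\) receives probability \((q-1)/(q^m-1)\), which is uniform over the \((q^m-1)/(q-1)\) lines through \(u\).

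Finally, combine the two pieces. The point \(\bu\) is uniform, and conditioned on \(\bu=u\) the line is uniform among lines containing \(u\). This is exactly the definition of \(\LinePt\), so the joint law of \((\bu,\ell)\) agrees with it.

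None of these steps is a real obstacle. The only care needed is to condition on \(\bv\ne0\) so that singleton lines are excluded, and to count the fibers of the direction map correctly, since each line has \(q-1\) direction representatives.
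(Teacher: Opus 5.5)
Your proof is correct and follows essentially the same route as the paper. Independence of \(\bu\) from \((\bi,\bv)\) gives a uniform point and an independent uniform nonzero direction, and each line through a fixed point has exactly \(q-1\) nonzero direction vectors, so the induced line is uniform among the \((q^m-1)/(q-1)\) lines through that point; you also verify the formula for \(\alpha\), which the paper states without derivation.
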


\begin{proof}
On the event \(\{\bi=m,\bv\ne0\}\), the vector \(\bv\) is uniform over
\(\F_q^m\setminus\{0\}\), independently of the uniform point \(\bu\).
Thus \(\ell=\{\bu+t\bv:t\in\F_q\}\) is generated by a uniform point and an
independent uniform nonzero direction.

For every fixed \(\bu\), each nondegenerate affine line through \(\bu\)
has exactly \(q-1\) nonzero direction vectors. Therefore the induced line
through \(\bu\) is uniform among the
\[
  \frac{q^m-1}{q-1}
\]
nondegenerate affine lines through \(\bu\). This is exactly \(\LinePt\).
\end{proof}

\begin{corollary}[Uniform-line domination]
\label[corollary]{cor:diag-dominates-uniform-lines}
For every nonnegative functional \(F(\bu,\ell)\) on nondegenerate
point-line pairs,
\[
  \E_{(\bu,\ell)\sim\LinePt}F(\bu,\ell)
  \le
  \alpha^{-1}
  \E_{(\bu,\bi,\bv,\ell)\sim\DiagPt}
  F(\bu,\ell)\mathbf 1_{\bv\ne0},
\]
where \(F(\bu,\ell)\mathbf 1_{\bv\ne0}\) is defined to be zero when \(\bv=0\).
\end{corollary}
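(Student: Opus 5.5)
The plan is to write the expectation under \(\DiagPt\) as a law of total expectation over the event \(E:=\{\bi=m,\ \bv\ne0\}\) and its complement. \Cref{lem:uniform-line-slice} identifies the conditional law on \(E\). Concretely, I would write
\[
  \E_{\DiagPt}F(\bu,\ell)\mathbf 1_{\bv\ne0}
  =
  \Pr[E]\,\E_{\DiagPt}\bigl[F(\bu,\ell)\mid E\bigr]
  +
  \Pr[E^c]\,\E_{\DiagPt}\bigl[F(\bu,\ell)\mathbf 1_{\bv\ne0}\mid E^c\bigr].
\]
On \(E\) we have \(\bv\ne0\), so the indicator equals \(1\) there. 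This is why it can be dropped from the first term.

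By \Cref{lem:uniform-line-slice}, the conditional distribution of \((\bu,\ell)\) given \(E\) is exactly \(\LinePt\). The first term is therefore \(\alpha\,\E_{\LinePt}F(\bu,\ell)\), using \(\Pr[E]=\alpha\).

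The second term is an expectation of a nonnegative quantity, so it is nonnegative. The convention that \(F\mathbf 1_{\bv\ne0}\) equals zero when \(\bv=0\) ensures the integrand is well defined, even though \(F\) is only given on nondegenerate pairs. When \(\bv\ne0\), the set \(\ell\) is a genuine nondegenerate line and \(F\ge0\) applies.

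Dropping the second term gives \(\alpha\,\E_{\LinePt}F\le\E_{\DiagPt}F\mathbf 1_{\bv\ne0}\). Since \(\alpha=(1-q^{-m})/m>0\) for \(m\ge1\) and \(q\ge2\), dividing by \(\alpha\) yields the claim.

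There is no real obstacle. The only point needing care is the bookkeeping on degenerate samples: \(F\) must only be evaluated where it is defined, and nonnegativity must be used on exactly those samples. Both are handled by the indicator convention.
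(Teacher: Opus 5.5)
Your proposal is correct and is essentially the paper's argument. The paper writes \(\E_{\LinePt}F=\alpha^{-1}\E_{\DiagPt}[F\mathbf 1_A]\) via \Cref{lem:uniform-line-slice} and then uses \(F\ge0\) with \(A\subseteq\{\bv\ne0\}\) to pass to \(\mathbf 1_{\bv\ne0}\); this is exactly your step of dropping the nonnegative complement term in the total-expectation decomposition.
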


\begin{proof}
Write \(A=\{\bi=m,\ \bv\ne0\}\), and extend \(F\) by zero to singleton
lines. By \Cref{lem:uniform-line-slice} and \(\Pr_{\DiagPt}[A]=\alpha\),
\[
  \begin{aligned}
  \E_{(\bu,\ell)\sim\LinePt}F(\bu,\ell)
  &=
  \E_{(\bu,\bi,\bv,\ell)\sim\DiagPt}
  \bigl[F(\bu,\ell)\mid A\bigr]\\
  &=
  \frac{1}{\alpha}
  \E_{(\bu,\bi,\bv,\ell)\sim\DiagPt}
  \bigl[F(\bu,\ell)\mathbf 1_A\bigr]\\
  &\le
  \frac{1}{\alpha}
  \E_{(\bu,\bi,\bv,\ell)\sim\DiagPt}
  \bigl[F(\bu,\ell)\mathbf 1_{\bv\ne0}\bigr].
  \end{aligned}
\]
The inequality uses \(F\ge0\) and \(A\subseteq\{\bv\ne0\}\).
For any nonnegative extension of \(F\) to singleton lines, the last
expression is also at most \(\alpha^{-1}\E_{\DiagPt}F(\bu,\ell)\).
In particular, since \(q\ge2\) and \(m\ge1\),
\[
  \alpha^{-1}
  =
  \frac{m}{1-q^{-m}}
  \le 2m.
\]
\end{proof}

\begin{lemma}[From point consistency to line consistency]
\label[lemma]{lem:point-to-line-consistency}
Let
\[
  (\psi,A^{\mathrm A},L^{\mathrm A},A^{\mathrm B},L^{\mathrm B})
\]
be a projective strategy that passes the degree-\(d\) total-degree
diagonal line-vs-point test with probability at least \(1-\epsilon\),
where \(0\le\epsilon\le1\).
Let \(\widetilde G^{\mathrm A},\widetilde G^{\mathrm B}\) be projective
measurements indexed by functions \(P:\F_q^m\to\F_q\), and write
\[
  \widetilde G^{r,u}
  :=\{\widetilde G^r_{[P(u)=a]}\}_{a\in\F_q}.
\]
Suppose, for some \(\nu\ge0\), that on average over \(\bu\sim\F_q^m\),
\[
\begin{aligned}
  A^{\mathrm A,u}_a\otimes I
  &\simeq_\nu I\otimes\widetilde G^{\mathrm B,u}_a,\\
  I\otimes A^{\mathrm B,u}_a
  &\simeq_\nu\widetilde G^{\mathrm A,u}_a\otimes I.
\end{aligned}
\]
Define the uniform-line inconsistencies by
\begin{equation}
\label{eq:decoder-line-errors}
\begin{aligned}
  \tau_{\mathrm A}
  &:=\E_{(\bu,\ell)\sim\LinePt}
    \Inc_\psi(\widetilde G^{\mathrm A,u},L^{\mathrm B,\ell,u}),\\
  \tau_{\mathrm B}
  &:=\E_{(\bu,\ell)\sim\LinePt}
    \Inc_\psi(L^{\mathrm A,\ell,u},\widetilde G^{\mathrm B,u}).
\end{aligned}
\end{equation}
Then
\[
  \tau_r\le3\nu+18m\epsilon,
  \qquad r\in\{\mathrm A,\mathrm B\}.
\]
\end{lemma}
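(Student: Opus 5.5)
We bound \(\tau_{\mathrm A}\) by a triangle inequality for inconsistency that routes through Alice's point measurement. The bound for \(\tau_{\mathrm B}\) is symmetric, with the roles of the provers exchanged and the Alice-line/Bob-point subtest used instead. Concretely, \(\widetilde G^{\mathrm A,u}\) and \(L^{\mathrm B,\ell,u}\) act on different subsystems. To chain through them we need an intermediate measurement on Bob's side and one on Alice's side: \(A^{\mathrm B,u}\) on Bob's side, and \(A^{\mathrm A,u}\) on Alice's side. So we use the chain
\[
  \widetilde G^{\mathrm A,u}\otimes I
  \;\approx\; I\otimes A^{\mathrm B,u}
  \;\approx\; A^{\mathrm A,u}\otimes I
  \;\approx\; I\otimes L^{\mathrm B,\ell,u}.
\]
The first link costs \(\nu\) by hypothesis. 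The second is point self-consistency, costing \(\delta\). The third is the Alice-point/Bob-line subtest. For projective measurements, the standard triangle inequality for \(\simeq\) in the probability convention of \cite[Section~4]{JNVWY20} gives an additive bound with constant factors. A plausible form is \(\Inc(M,N)\le \Inc(M,N')+\Inc(M',N')+\Inc(M',N)\) with rescaled constants, and it is these constants that produce the factor \(3\) in \(3\nu\).

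The key step is converting the subtest errors, which are stated over \(\DiagPt\), into errors over \(\LinePt\). For this we apply \Cref{cor:diag-dominates-uniform-lines} to the nonnegative functional \(F(\bu,\ell)=\Inc_\psi(A^{\mathrm A,u},L^{\mathrm B,\ell,u})\), which gives
\[
  \E_{\LinePt}F\le 2m\,\E_{\DiagPt}F \le 2m\,\eta_{\mathrm{AB}}.
\]
Here the inconsistency of \(A^{\mathrm A,u}\) against the coarse-grained line answer \(L^{\mathrm B,\ell,u}\) is exactly the rejection probability of that subtest, and singleton lines contribute nonnegatively. Under \(\LinePt\) the point \(\bu\) is uniform, so the \(\nu\) and \(\delta\) links, which are averaged over uniform \(\bu\) and do not depend on \(\ell\), transfer directly.

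It remains to collect the constants. Since each subtest has weight \(1/3\), we have \(\eta_{\mathrm{AB}}\le 3\epsilon\) and \(\delta\le3\epsilon\). The line link therefore contributes at most \(2m\cdot3\epsilon=6m\epsilon\), and the point-consistency link at most \(3\epsilon\le 3m\epsilon\).

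The main obstacle is the triangle inequality with honest constants. For projective measurements one usually passes to the state-dependent distance \(\|M_a\otimes I\psi - I\otimes N_a\psi\|^2\approx 2\,\Inc\). The triangle inequality for vector norms then introduces squares, and a Cauchy–Schwarz estimate with factor \(3\) across three links gives \(\Inc(M,N)\le 3(\Inc_1+\Inc_2+\Inc_3)\). This produces \(3\nu+3\cdot 3\epsilon+3\cdot6m\epsilon\), which is slightly off the stated \(18m\epsilon\) because of the extra \(9\epsilon\). I would try to absorb that term by bounding \(\eta_{\mathrm{AB}}\) and \(\delta\) jointly: \(\eta_{\mathrm{AB}}+\delta\le 3\epsilon\) holds since the subtests are disjoint events of weight \(1/3\). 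This gives \(3(\delta+2m\eta_{\mathrm{AB}})\le 6m(\delta+\eta_{\mathrm{AB}})\le18m\epsilon\), matching the statement.
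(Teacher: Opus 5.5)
Your proposal is correct and is essentially the paper's proof. It uses the same telescoping through \(I\otimes A^{\mathrm B,u}\) and \(A^{\mathrm A,u}\otimes I\), and the same projective identity \(\sum_a\|(M_a\otimes I-I\otimes N_a)\ket{\psi}\|^2=2\Inc_\psi(M,N)\) combined with \(\|z_1+z_2+z_3\|^2\le3\sum_j\|z_j\|^2\). It then transfers the line link to \(\LinePt\) via \Cref{cor:diag-dominates-uniform-lines} and uses the joint bound \(\delta+2m\eta_{\mathrm{AB}}\le2m(\delta+\eta_{\mathrm{AB}})\le6m\epsilon\), exactly as the paper does. One small correction: the norm identity is exact rather than approximate, because every point-value coarse-graining of a projective measurement is again projective.
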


\begin{proof}
Let \(\eta_{\mathrm{AB}},\eta_{\mathrm{BA}},\delta\) be the rejection
probabilities of the Alice-point/Bob-line, Alice-line/Bob-point, and
point self-consistency subtests. Thus
\(\eta_{\mathrm{AB}}+\eta_{\mathrm{BA}}+\delta\le3\epsilon\).
Define the corresponding uniform-line errors by
\[
\begin{aligned}
  \eta^{\mathrm{all}}_{\mathrm{AB}}
  &:=\E_{(\bu,\ell)\sim\LinePt}
    \Inc_\psi(A^{\mathrm A,u},L^{\mathrm B,\ell,u}),\\
  \eta^{\mathrm{all}}_{\mathrm{BA}}
  &:=\E_{(\bu,\ell)\sim\LinePt}
    \Inc_\psi(L^{\mathrm A,\ell,u},A^{\mathrm B,u}).
\end{aligned}
\]
By \Cref{cor:diag-dominates-uniform-lines}, nonnegativity of the
singleton-line contributions, and \(\alpha^{-1}\le2m\),
\[
  \eta^{\mathrm{all}}_o\le2m\eta_o,
  \qquad o\in\{\mathrm{AB},\mathrm{BA}\}.
\]
The marginal distribution of \(\bu\) under \(\LinePt\) is uniform.
We may therefore adjoin a line sampled conditionally on \(\bu\) to a
consistency relation averaged only over \(\bu\), without changing its error.

For projective measurements \(M,N\) on opposite subsystems, expanding the
square and using completeness gives
\[
\begin{aligned}
  &\sum_a\|(M_a\otimes I-I\otimes N_a)\ket{\psi}\|^2\\
  &\qquad=2-2\sum_a\bra{\psi}M_a\otimes N_a\ket{\psi}
  =2\Inc_\psi(M,N).
\end{aligned}
\]
This identity also holds after averaging over questions. All point-value
coarse-grainings here are projective, since they group orthogonal
projections. For Alice's decoder, telescope
\[
\begin{aligned}
  &\widetilde G^{\mathrm A,u}_a\otimes I
    -I\otimes L^{\mathrm B,\ell,u}_a\\
  &\quad=(\widetilde G^{\mathrm A,u}_a\otimes I
    -I\otimes A^{\mathrm B,u}_a)\\
  &\qquad+(I\otimes A^{\mathrm B,u}_a
    -A^{\mathrm A,u}_a\otimes I)\\
  &\qquad+(A^{\mathrm A,u}_a\otimes I
    -I\otimes L^{\mathrm B,\ell,u}_a).
\end{aligned}
\]
Apply \(\|z_1+z_2+z_3\|^2\le3\sum_{j=1}^3\|z_j\|^2\) to these
operators acting on \(\ket{\psi}\), sum over \(a\), and average over
\(\LinePt\). The projective identity gives
\[
  \tau_{\mathrm A}
  \le3\bigl(\nu+\delta
    +\eta^{\mathrm{all}}_{\mathrm{AB}}\bigr).
\]
For Bob's decoder, telescope through Alice's point measurement and then
Bob's point measurement, ending at Alice's line measurement. The same
calculation gives
\[
  \tau_{\mathrm B}
  \le3\bigl(\nu+\delta
    +\eta^{\mathrm{all}}_{\mathrm{BA}}\bigr).
\]
For either \(o\in\{\mathrm{AB},\mathrm{BA}\}\), the uniform-line
domination bound and the three-subtest error bound imply
\[
  \delta+\eta^{\mathrm{all}}_o
  \le2m(\delta+\eta_o)\le6m\epsilon.
\]
Consequently,
\[
  \tau_r\le3\nu+18m\epsilon,
  \qquad r\in\{\mathrm A,\mathrm B\}.
\]
\end{proof}

\begin{lemma}[Degree-collapse operator bound]
\label[lemma]{lem:degree-collapse-total-diag-proof}
Let \(P:\F_q^m\to\F_q\) be represented by a polynomial of total degree
\(D>d\), with
\[
  D\le md<q.
\]
Let
\[
  L^\ell=\{L^\ell_h\}_{\deg h\le d}
\]
be any projective degree-\(\le d\) line measurement on nondegenerate
affine lines. Then
\[
  \E_{(\bu,\ell)\sim\LinePt}
  L^\ell_{[h(u)=P(u)]}
  \le
  \frac{3D}{q}I
  \le
  \frac{3md}{q}I.
\]
\end{lemma}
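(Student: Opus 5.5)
The plan is to reduce the operator inequality to a scalar bound for each fixed line. Fix a nondegenerate line \(\ell\). The operator \(L^\ell_{[h(u)=P(u)]}\) equals \(\sum_h \mathbf 1[h(u)=P(u)]\,L^\ell_h\). Conditioned on \(\ell\), the point \(\bu\) is uniform on \(\ell\), because \(\LinePt\) is a uniform point together with a uniform line through it, which is the same as a uniform line together with a uniform point on it. Hence
\[
  \E_{(\bu,\ell)\sim\LinePt}L^\ell_{[h(u)=P(u)]}
  =\E_\ell\sum_h \Pr_{u\in\ell}[h(u)=P(u)]\,L^\ell_h .
\]
Since the \(L^\ell_h\) are positive semidefinite and sum to \(I\), it suffices to show the following: for most lines \(\ell\) (call the rest bad), every \(h\) of degree \(\le d\) agrees with \(P|_\ell\) on at most \(D/q\) of the points. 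On a bad line the coefficients are trivially at most \(1\), so we bound that line's contribution by \(I\). The total bound then becomes \(\bigl(D/q+\Pr[\ell\text{ bad}]\bigr)I\).

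\textbf{Agreement on good lines.} Parametrize \(\ell=\{x+tv\}\). Then \(P|_\ell(t)=P(x+tv)\) is a univariate polynomial of degree \(\le D<q\). If its degree exceeds \(d\), then \(P|_\ell-h\) is a nonzero polynomial of degree \(\le D\) for every \(h\) of degree \(\le d\). It therefore has at most \(D\) roots, so the agreement is at most \(D/q\). We call \(\ell\) good when \(\deg P|_\ell>d\).

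\textbf{Probability of a bad line.} Let \(P_D\) be the homogeneous top-degree part of \(P\). Because \(D<q\), the reduced representative is the given polynomial, so \(P_D\) is a nonzero polynomial. The \(t^D\) coefficient of \(P(x+tv)\) equals \(P_D(v)\). So \(\deg P|_\ell=D>d\) whenever \(P_D(v)\neq0\). This condition does not depend on which direction vector of \(\ell\) we choose, since \(P_D(cv)=c^DP_D(v)\). Under \(\LinePt\) the direction \(v\) is uniform on \(\F_q^m\setminus\{0\}\). By Schwartz--Zippel,
\[
  \Pr[P_D(v)=0]\le \frac{D/q}{1-q^{-m}}\le \frac{2D}{q}.
\]
Combining the two steps gives \(\E L^\ell_{[h(u)=P(u)]}\le (D/q+2D/q)I=(3D/q)I\). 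The second inequality in the statement follows from \(D\le md\).

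\textbf{Main obstacle.} The delicate step is making the "bad line" bound rigorous at the operator level. On a bad line we need \(\sum_h c_h L^\ell_h\le I\) for coefficients \(c_h\in[0,1]\), and on a good line \(\le (D/q)I\). Both follow from the \(L^\ell_h\) being positive semidefinite and summing to \(I\); projectivity is not even needed.

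We must also check that \(h\) is a function on the line of degree \(\le d\) in the parametrization, which is consistent with the definition of the test. We must check that reduction is not an issue for the univariate restriction, which holds because \(D<q\).
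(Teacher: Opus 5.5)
Your proposal is correct and follows essentially the same route as the paper. On lines whose direction \(v\) satisfies \(P_D(v)\neq0\), the restriction has exact degree \(D>d\), so agreement with any degree-\(\le d\) answer is at most \(D/q\); the remaining directions have probability at most \(2D/q\) by Schwartz--Zippel after conditioning on \(v\neq0\), and you bound them trivially by \(I\). Your extra remarks also fill in details the paper leaves implicit: \(P_D\neq0\) because \(D<q\), \(\bu\) is uniform on \(\ell\) given \(\ell\), and only positivity and completeness of \(L^\ell\) are needed.
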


\begin{proof}
Let \(P_D\) be the top homogeneous part of \(P\). For a parametrized line
\[
  a+tb,
  \qquad b\ne0,
\]
the coefficient of \(t^D\) in \(P(a+tb)\) is \(P_D(b)\).
Hence, if \(P_D(b)\ne0\), the restriction \(P|_\ell\) has univariate
degree exactly \(D>d\).

For such a line \(\ell\), and for any degree-\(\le d\) polynomial
\(h:\ell\to\F_q\), the function \(P|_\ell-h\) is a nonzero univariate
polynomial of degree at most \(D\). Since \(D<q\),
\[
  \Pr_{\bu\sim\ell}[h(\bu)=P(\bu)]
  \le
  \frac{D}{q}.
\]
Therefore, for every direction \(b\) with \(P_D(b)\ne0\),
\[
  \E_{\bu\sim\ell}
  L^\ell_{[h(u)=P(u)]}
  =
  \sum_{\deg h\le d}
  \Pr_{\bu\sim\ell}[h(\bu)=P(\bu)]L^\ell_h
  \le
  \frac{D}{q}I.
\]

It remains to bound the probability that \(P_D(b)=0\).
By Schwartz--Zippel,
\[
  \Pr_{b\sim\F_q^m}[P_D(b)=0]
  \le
  \frac{D}{q}.
\]
Conditioning on \(b\ne0\) increases this probability by at most a factor
\(2\), so
\[
  \Pr_{b\sim\F_q^m\setminus\{0\}}[P_D(b)=0]
  \le
  \frac{2D}{q}.
\]
On these bad directions we use the trivial bound by \(I\). Thus
\[
  \E_{(\bu,\ell)\sim\LinePt}
  L^\ell_{[h(u)=P(u)]}
  \le
  \left(\frac{2D}{q}+\frac{D}{q}\right)I
  =
  \frac{3D}{q}I.
\]
\end{proof}

\begin{lemma}[Pruning excessive-degree outcomes]
\label[lemma]{lem:degree-pruning}
Suppose \(q>6md\). Let
\((\psi,A^{\mathrm A},L^{\mathrm A},A^{\mathrm B},L^{\mathrm B})\)
be a projective strategy for the degree-\(d\) total-degree diagonal
line-vs-point test, and let
\(\widetilde G^{\mathrm A},\widetilde G^{\mathrm B}\in\TotPolyMeas(m,q,md)\).
Suppose their two cross-prover point-consistency errors and their
polynomial-label inconsistency are at most \(\nu\ge0\); explicitly,
on average over \(\bu\sim\F_q^m\),
\[
\begin{aligned}
  A^{\mathrm A,u}_a\otimes I
  &\simeq_\nu I\otimes\widetilde G^{\mathrm B}_{[P(u)=a]},\\
  I\otimes A^{\mathrm B,u}_a
  &\simeq_\nu\widetilde G^{\mathrm A}_{[P(u)=a]}\otimes I,
\end{aligned}
\]
and
\[
  \widetilde G^{\mathrm A}_P\otimes I
  \simeq_\nu I\otimes\widetilde G^{\mathrm B}_P.
\]
Let \(\tau_{\mathrm A},\tau_{\mathrm B}\) be their uniform-line
inconsistencies from \eqref{eq:decoder-line-errors}.
Then a common relabeling gives projective measurements
\(G^{\mathrm A},G^{\mathrm B}\in\TotPolyMeas(m,q,d)\)
whose two cross-prover point-consistency errors and polynomial-label
inconsistency are all at most
\[
  \nu+\frac{\max\{\tau_{\mathrm A},\tau_{\mathrm B}\}}{1-3md/q}.
\]
\end{lemma}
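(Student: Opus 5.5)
The plan is to relabel every outcome of excessive total degree by the zero polynomial. Define the map \(R\) on \(\TotPoly(m,q,md)\) by \(R(P)=P\) if \(\Tot(P)\le d\) and \(R(P)=0\) otherwise. For \(r\in\{\mathrm A,\mathrm B\}\) and \(g\in\TotPoly(m,q,d)\), put
\[
  G^r_g:=\sum_{P:\,R(P)=g}\widetilde G^r_P .
\]
These operators are sums of orthogonal projections grouped by a deterministic function, so \(G^r\) is a projective measurement in \(\TotPolyMeas(m,q,d)\).

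\textbf{Label consistency.} Since the same map \(R\) is applied on both sides, \(R(P)\ne R(P')\) forces \(P\ne P'\). Hence the polynomial-label inconsistency of \(G^{\mathrm A},G^{\mathrm B}\) is at most that of \(\widetilde G^{\mathrm A},\widetilde G^{\mathrm B}\), which is at most \(\nu\). This is the deterministic post-processing fact.

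\textbf{Weight of bad outcomes.} Let \(\Bad:=\{P:\ d<\Tot(P)\le md\}\) and \(W_r:=\sum_{P\in\Bad}\bra{\psi}\widetilde G^r_P\otimes I\ket{\psi}\) (with the obvious placement of tensor factors for \(r=\mathrm B\)). I will bound \(W_r\) by \(\tau_r\). Expanding the coarse-grainings, the inconsistency is exactly a sum over polynomial labels:
\[
  \Inc_\psi(\widetilde G^{\mathrm A,u},L^{\mathrm B,\ell,u})
  =\sum_P\bra{\psi}\widetilde G^{\mathrm A}_P\otimes\bigl(I-L^{\mathrm B,\ell}_{[h(u)=P(u)]}\bigr)\ket{\psi}.
\]
Every term is nonnegative, so we may drop all \(P\notin\Bad\). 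Because \(\widetilde G^{\mathrm A}_P\) does not depend on \((\bu,\ell)\), the average over \(\LinePt\) can be moved onto the second tensor factor. Then \Cref{lem:degree-collapse-total-diag-proof} applies to each \(P\in\Bad\); its hypothesis \(D\le md<q\) holds since \(q>6md\). It gives \(\E L^{\mathrm B,\ell}_{[h(u)=P(u)]}\le (3md/q)I\). Using that \(X\le Y\) implies \(\widetilde G_P\otimes X\le\widetilde G_P\otimes Y\) for \(\widetilde G_P\ge0\), we get
\[
  \tau_{\mathrm A}\ge(1-3md/q)\,W_{\mathrm A}.
\]
Symmetrically, \(\tau_{\mathrm B}\ge(1-3md/q)W_{\mathrm B}\). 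The factor \(1-3md/q\) is positive because \(q>6md\).

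\textbf{Point consistency.} For Alice's point measurement against Bob's new decoder, split Bob's outcomes \(P\) into good and bad ones. On \(P\notin\Bad\), the event \(a\ne R(P)(u)\) coincides with \(a\ne P(u)\), so these terms are bounded by the old error. On \(P\in\Bad\), bound \(\sum_{a}\bra{\psi}A^{\mathrm A,u}_a\otimes\widetilde G^{\mathrm B}_P\ket{\psi}\) by completeness, giving at most \(W_{\mathrm B}\). Hence the new error is at most
\[
  \nu+W_{\mathrm B}\le\nu+\frac{\tau_{\mathrm B}}{1-3md/q}.
\]
The same argument, with roles swapped, gives \(\nu+\tau_{\mathrm A}/(1-3md/q)\) for the other relation. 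Taking the maximum yields the stated bound.

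The only delicate step is the operator inequality in the weight bound. It requires writing \(\Inc\) exactly as a sum over polynomial labels and justifying that the \(\LinePt\) average passes through the fixed factor \(\widetilde G_P\). Once that is done, \Cref{lem:degree-collapse-total-diag-proof} closes the argument directly.
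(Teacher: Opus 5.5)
Your proposal is correct and follows the paper's proof essentially step for step. It uses the same zero-relabeling of the \(\Bad\) outcomes. It expands \(\tau_r\) exactly as a sum over decoder labels and invokes \Cref{lem:degree-collapse-total-diag-proof} to get \(W_r\le\tau_r/(1-3md/q)\). For point consistency it uses the same good/bad split, which the paper phrases as the indicator inequality \(\mathbf 1_{a\ne\pi(P)(u)}\le\mathbf 1_{a\ne P(u)}+\mathbf 1_{P\in\Bad}\).
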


\begin{proof}
\textbf{Bounding the excessive-degree weight.}
Define
\[
  \Bad:=\{P:\ d<\Tot(P)\le md\},
  \qquad
  \widetilde G^r_{\Bad}:=\sum_{P\in\Bad}\widetilde G^r_P,
\]
and let
\[
  \mu_{\mathrm A}
  :=\bra{\psi}\widetilde G^{\mathrm A}_{\Bad}\otimes I\ket{\psi},
  \qquad
  \mu_{\mathrm B}
  :=\bra{\psi}I\otimes\widetilde G^{\mathrm B}_{\Bad}\ket{\psi}.
\]
These are the probabilities that the two decoders return polynomials of
total degree greater than \(d\). Put
\[
  \beta:=\frac{3md}{q}<\frac12.
\]
For every \(P\in\Bad\), completeness and
\Cref{lem:degree-collapse-total-diag-proof} give, on either prover's side,
\[
\begin{aligned}
  \E_{(\bu,\ell)\sim\LinePt}
    L^{r,\ell}_{[h(u)\ne P(u)]}
  &=I-\E_{(\bu,\ell)\sim\LinePt}
    L^{r,\ell}_{[h(u)=P(u)]}\\
  &\ge(1-\beta)I.
\end{aligned}
\]
Expanding the decoder's point-value coarse-graining and retaining only
the nonnegative bad-outcome terms now yields
\[
\begin{aligned}
  \tau_{\mathrm A}
  &=\E_{(\bu,\ell)\sim\LinePt}\sum_P
    \bra{\psi}\widetilde G^{\mathrm A}_P\otimes
    L^{\mathrm B,\ell}_{[h(u)\ne P(u)]}\ket{\psi}\\
  &\ge(1-\beta)\sum_{P\in\Bad}
    \bra{\psi}\widetilde G^{\mathrm A}_P\otimes I\ket{\psi}\\
  &=(1-\beta)\mu_{\mathrm A}.
\end{aligned}
\]
Here \(\widetilde G^{\mathrm A}_P\) is independent of the sampled questions,
and tensoring a positive-operator inequality with this positive projection
preserves its order. Applying the same argument on the other tensor
factor gives explicitly
\[
\begin{aligned}
  \tau_{\mathrm B}
  &=\E_{(\bu,\ell)\sim\LinePt}\sum_P
    \bra{\psi}L^{\mathrm A,\ell}_{[h(u)\ne P(u)]}\otimes
    \widetilde G^{\mathrm B}_P\ket{\psi}\\
  &\ge(1-\beta)\sum_{P\in\Bad}
    \bra{\psi}I\otimes\widetilde G^{\mathrm B}_P\ket{\psi}\\
  &=(1-\beta)\mu_{\mathrm B}.
\end{aligned}
\]
Thus, for both provers,
\[
  \mu_r
  \le\frac{\tau_r}{1-\beta},
  \qquad r\in\{\mathrm A,\mathrm B\}.
\]

\medskip
\noindent\textbf{Relabeling the excessive-degree outcomes.}
Define the pruning map
\[
  \pi(P)
  :=
  \begin{cases}
  P, & \Tot(P)\le d,\\
  0, & \Tot(P)>d.
  \end{cases}
\]
Set, for \(r\in\{\mathrm A,\mathrm B\}\) and
\(g\in\TotPoly(m,q,d)\),
\[
  G^r_g
  :=
  \sum_{P:\pi(P)=g}\widetilde G^r_P.
\]
This is a deterministic coarse-graining of a projective measurement, so
\[
  G^{\mathrm A},G^{\mathrm B}\in\TotPolyMeas(m,q,d)
\]
are projective.

Self-consistency is preserved by deterministic post-processing:
\[
  G^{\mathrm A}_g\otimes I
  \simeq_\nu
  I\otimes G^{\mathrm B}_g.
\]

For point consistency, the scalar inequality
\[
  \mathbf 1_{a\ne\pi(P)(u)}
  \le\mathbf 1_{a\ne P(u)}+\mathbf 1_{P\in\Bad}
\]
shows that relabeling increases the error by at most the corresponding
bad-outcome mass. Therefore, on average over \(\bu\sim\F_q^m\),
\[
  A^{\mathrm A,u}_a\otimes I
  \simeq_{\nu+\mu_{\mathrm B}}
  I\otimes G^{\mathrm B}_{[g(u)=a]},
\]
and
\[
  I\otimes A^{\mathrm B,u}_a
  \simeq_{\nu+\mu_{\mathrm A}}
  G^{\mathrm A}_{[g(u)=a]}\otimes I.
\]
All three errors are therefore at most
\[
  \nu+\max\{\mu_{\mathrm A},\mu_{\mathrm B}\}
  \le\nu+\frac{\max\{\tau_{\mathrm A},\tau_{\mathrm B}\}}{1-\beta}.
\]
\end{proof}

\section{Proof of the soundness theorem}
\label{sec:total-degree-proof}

\begin{proof}[Proof of \Cref{thm:total-degree-diagonal-from-individual-degree}]
Take \(C_0=2000000\). If \(q\le6md\), use the trivial projective decoders
\(G^{\mathrm A}_0=G^{\mathrm B}_0=I\), with every other outcome zero.
Their polynomial labels agree perfectly, and each point-consistency
error is at most \(1\). Since
\[
  (k+1)^2m^4\left(\frac{d}{q}\right)^{c_0}
  \ge6^{-c_0}(k+1)^2m^{4-c_0}
  \ge6^{-c_0},
\]
the claimed bound holds in this case.

Now assume \(q>6md\). Apply \Cref{cor:preliminary-decoders} to obtain
projective decoders \(\widetilde G^{\mathrm A},\widetilde G^{\mathrm B}\)
of total degree at most \(md\),
with error \(\nu_{\mathrm{ind}}\) from \eqref{eq:preliminary-decoder-error}.
By \Cref{lem:point-to-line-consistency},
\(\tau_r\le3\nu_{\mathrm{ind}}+18m\epsilon\) for both provers.
Since \(1-3md/q>1/2\), \Cref{lem:degree-pruning} gives degree-\(\le d\)
projective decoders whose three consistency errors are at most
\[
  \nu_{\mathrm{ind}}
  +2(3\nu_{\mathrm{ind}}+18m\epsilon)
  =7\nu_{\mathrm{ind}}+36m\epsilon.
\]

It remains to bound this expression by \(\nu_{\mathrm{tot}}\) as defined
in the theorem. With \(K=\max\{1,\lceil k\rceil\}\) and
\(\rho=\min\{1,3\epsilon/2\}\), we have
\[
\begin{aligned}
  \rho^{c_0}\le(3/2)^{c_0}\epsilon^{c_0},
  &\qquad K^2\le(k+1)^2,\\
  e^{-K/(2560000m^2)}
  &\le e^{-k/(2560000m^2)}.
\end{aligned}
\]
Also \(m\epsilon\le(k+1)^2m^4\epsilon^{c_0}\), since
\(0<c_0<1\), \(0\le\epsilon\le1\), and \(m\ge1\).
Substituting into \eqref{eq:preliminary-decoder-error} gives
\[
\begin{aligned}
  7\nu_{\mathrm{ind}}+36m\epsilon
  &\le (k+1)^2m^4\Bigl[
    \bigl(700000(3/2)^{c_0}+36\bigr)\epsilon^{c_0}\\
  &\qquad\qquad+700000\left(\frac{d}{q}\right)^{c_0}
    +700000e^{-k/(2560000m^2)}\Bigr]\\
  &\le\nu_{\mathrm{tot}},
\end{aligned}
\]
because \(700000(3/2)^{c_0}+36<2000000=C_0\).
\end{proof}

\paragraph{AI usage.}
ChatGPT 5.5 first found this valid proof, and ChatGPT 6 was used in the
writing of this paper.

\end{document}